\documentclass{article}

\usepackage{graphicx}
\usepackage{amssymb}

\newcommand{\keyw}[1]{{\bf #1}}

\newtheorem{theorem}{Theorem}
\newtheorem{lemma}{Lemma}

\newenvironment{proof}{\noindent {\em Proof}.\ }{\proofbox\par\smallskip\par}
\newcommand{\halmos}{\rule{1ex}{1.4ex}}
\newcommand{\proofbox}{\hspace*{\fill}\mbox{$\halmos$}}

\begin{document}

\title{A Space-Optimal Hidden Surface Removal Algorithm for Iso-Oriented Rectangles}

\author{A. Tsakalidis\\ 
Department of Computer Engineering \& Informatics,\\ 
University of Patras, Greece.\\
\texttt{tsak@cti.gr} 
\and K. Tsichlas \\
Informatics Department,\\
Aristotle University of Thessaloniki, Greece.\\ 
\texttt{tsichlas@delab.csd.auth.gr}}

\maketitle

\begin{abstract}
 We investigate the problem of finding the visible pieces of a scene of objects from a specified viewpoint. In particular, we are interested in the design of an efficient hidden surface removal algorithm for a scene comprised of iso-oriented rectangles. We propose an algorithm where given a set of $n$ iso-oriented rectangles we report all visible surfaces in $O((n+k)\log n)$ time and linear space, where $k$ is the number of surfaces reported. The previous best result by Bern, has the same time complexity but uses $O(n\log n)$ space.

{\bf  Computational Geometry, Computer Graphics, Hidden Surface Removal, Iso-Oriented Rectangles.}

\end{abstract}

\section{Introduction} \label{sec:intro}
 
 The {\em Hidden Surface Removal} (HSR) problem is one of the fundamental problems in computer graphics. Given a set of objects in a three dimensional scene we want to compute the visible parts of the scene from a given viewpoint. As a result, pieces of objects that lay behind other objects with respect to the given viewpoint are invisible. In general, points are visible when the line between each point and the viewpoint is not intersected by other objects.
 
 A slightly easier problem than HSR is the Hidden Line Elimination (HLE) problem. In the HLE problem it is assumed that surfaces do not carry information (like color) and only the visible line segments that define the exterior of each object are interesting. These problems are strongly connected but experience has shown that the HSR problem is more difficult the the HLE problem. In particular, hidden line information does not necessarily allow us to determine the frontmost faces of the environment \cite{mck}.
 
 In the general case of the HSR problem the scene may consist of arbitrary objects in the three dimensional space. A simple but important special case of the general HSR problem is one in which the scene consists of $n$ rectangles which are parallel to the $x-y$ plane and their edges are parallel to the $x$ and $y$ axis. These rectangles are called {\em iso-oriented}. An iso-oriented rectangle $R$ can be fully specified by five coordinates, $[R.x_{1}, R.x_{2}]\times[R.y_{1},R.y_{2}]\times R.z$. It is assumed that no pair of rectangles intersects in a two dimensional region, though pairs may intersect along an edge.
 
 In this paper, we consider a static scene (changes are not allowed in the scene) comprised of $n$ iso-oriented rectangles in the three-dimensional space. The goal is to to compute and depict all visible parts of these rectangles as they would be seen from an observer at a specified viewpoint. We assume that this viewpoint lies at $z=+\infty$.

 Most of the algorithms known for scenes of iso-oriented rectangles are {\em output sensitive}. This means that their time complexity depends on the size of the output, that is the complexity of the visible part of the scene. These algorithms are generally more efficient than algorithms with time complexity depending solely on $n$. For static scenes consisting of polygons, McKenna \cite{mck} has designed a worst case optimal algorithm with $O(n^{2})$ time and space complexity. Note that the complexity of the visible scene cannot be larger than $O(n^{2})$ when the polygons in the scene have $n$ edges in total. This is because the number of visible parts can not exceed the number of intersections between the objects.
 
 Gutting et al. \cite{gut} proposed an output sensitive HLE algorithm for static scenes of rectangles with time complexity $O((n+k)\log^{2}n)$, where $k$ is the number of reported visible segments. Note, that when $k$ is small, then the time complexity is far less than $O(n^{2})$. Of course, for complicated scenes where $k=O(n^{2})$ this algorithm is less efficient than the algorithm of McKenna. This algorithm also handles $c$-oriented rectangles (rectangles aligned with a fixed number of orientations, not just horizontal and vertical). Atallah and Goodrich \cite{ata} have proposed an algorithm with $O(n^{3/2}+k)$ time complexity. 
 
 For a static scene consisting of iso-oriented rectangles, Bern \cite{bern} has designed an algorithm with $O((n+k)\log n)$ time complexity and $O(n\log n)$ space complexity. 
He designed algorithms for the HLE problem and later he extended them to tackle the HSR problem. Mehlhron et al. \cite{mehl} proposed an algorithm with $O(n\log n+k\log(n^{2}/k))$ time complexity and $O(n\log n)$ space complexity for the HLE problem. Kitsios et al. \cite{kits} have improved on this result by proposing an algorithm for the HLE problem that uses linear space while retaining the above time complexity. 
 
 This paper extends the last result of Kitsios and Tsakalidis to tackle the HSR problem in a scene consisting of iso-oriented rectangles. We propose an HSR algorithm with $O((n+k)\log{n})$ time complexity using linear space. Our result improves the algorithm proposed by Bern \cite{bern} by a logarithmic factor in its space complexity. Our algorithm modifies the algorithm of \cite{kits} and extends it by adding appropriate data structures to store the necessary surface information. In addition, our algorithm needs only one pass of the scene, while the algorithm of Kitsios et al. needs two, one pass for the vertical edges and one for the horizontal edges (our algorithm can also be used for the HLE problem with minor modifications). The only drawback is that in the HLE problem the multiplicative factor of $k$ in the time complexity is $O(\log{(n^2/k)})$, which in general is less than $O(\log{n})$. This is due to the maintenance of the visible regions and is another indication that the HSR problem is generally more difficult than the HLE problem.
 
 This special case of hidden surface removal has application to overlapping windows in computer displays. It allows us to solve window management problems efficiently. In addition, the algorithms for this restricted case of the hidden surface removal problem could find use in cartographic applications as well as used in VLSI design tools for many-layer technologies.
 
 The remainder of the paper is organized as follows. In Section~\ref{sec:prel} a description of the algorithm is given as well as some basic definitions and techniques, which are essential for the comprehension of the algorithm. The algorithm for the HSR problem is described in Section~\ref{sec:alg}. The description is divided into two parts, the first part describes the preprocessing stage while the second part describes the reporting stage. Finally, in Section~\ref{sec:con} we conclude with some final remarks.

\section{Preliminaries} \label{sec:prel}
 
 The HSR problem is considered in a static scene consisting of $n$ iso-oriented rectangles. Our algorithm uses the plane sweep technique and cuts the scene into slabs in order to guarantee linear space.
 
 Initially, the scene is divided into {\em slabs}. A plane parallel to the $y-z$ plane sweeps each slab along the $x$ axis from $x=-\infty$ to $x=+\infty$. All edges of rectangles parallel to the $x$ axis are called {\em horizontal} while all parallel to the $y$ axis are called {\em vertical}. In this way, the intersection of the sweep plane and the scene in a random position is a set of vertical segments. Assuming, without loss of generality, that all the $x$, $y$ and $z$ coordinates are distinct then the intersection of the sweep plane and the scene in each sweep station consists of one and only one vertical edge. The sweep stations of the algorithm consist of the ordered set of the $x$ coordinates of the rectangles.
 
 The set of the vertical edges of each slab is stored in a segment tree. A segment tree~\cite{bent} is constructed from scratch at the beginning of each slab. This tree has $2n-1$ leaves and it is implemented as a binary balanced tree. Its $i$-th leaf represents the elementary interval $[y_{i},y_{i+1}]$ (if we sort the $y$ coordinates then $y_{i}$ will be just before $y_{i+1}$), which is termed a $y$-range. Each internal node $u$ has a $y$-range equal to the union of the $y$-ranges of the leaves of the subtree rooted at $u$. Henceforth, the ends of a vertical segment $s$ will be represented as $s.y_{1}$ and $s.y_{2}$, where $s.y_{1} < s.y_{2}$. In this way, the $y$-range of $u$ is between $u.y_{1}$ and $u.y_{2}$. Let $father(u)$, $lson(u)$ and $rson(u)$ denote the father and the two children of $u$ respectively. A vertical segment is associated with $O(\log{n})$ nodes $u$ such that $u.range \subseteq [s.y_{1},s.y_{2}]$ and $father(u).yrange \not \subset [s.y_{1},s.y_{2}]$. Each node $u$ of the segment tree has a set $S(u)$ of associated segments. By precomputing the lists of rectangles in each node of the segment tree we are able to reduce the time complexity by a logarithmic factor \cite{bern}.
  
 The horizontal segments are stored in a binary search tree, called the {\em region tree}. The region tree is maintained during the transition between slabs. In this way, information concerning the visible regions is transferred among slabs. The leaves of the region tree are linked by means of a double linked list. We implement the region tree as a red-black tree \cite{red}, so that the amortized cost of updates is constant.
 
 Finally, the coordinates of a rectangle $R=[x_{1},x_{2}]\times[y_{1},y_{2}]\times z$ are denoted by $R_{x_{1}}$, $R_{x_{2}}$, $R_{y_{1}}$, $R_{y_{2}}$ and $R_{z}$ respectively. The line segment defined by the endpoints $(R_{x_{1}},R_{y_{1}},R_{z})$ and $(R_{x_{2}},R_{y_{2}},R_{z})$ is referred as the left edge of the rectangle $R$. The right, bottom and top edges are defined similarly. We say that a rectangle $R$ is higher than a rectangle $R'$, or that $R'$ is lower than $R$, when $R.z > R'.z$. The same goes for the edges. Finally, we assume that there is a fictitious rectangle $background$ that lies behind the whole scene with height $z=-\infty$.

\section{The Algorithm} \label{sec:alg}
 
 The algorithm consists of two stages, the preprocessing stage and the reporting stage. In the former stage the necessary data structures are constructed and initialized appropriately. In the latter stage, we use the available data structures to find all visible surfaces by using the plane sweep technique. In the following, we will first refer to the preprocessing stage and then move to the reporting stage.

\subsection{The Preprocessing Stage} \label{sub:preprocessing}
 
 First of all, the vertices of the rectangles are sorted with respect to their $x$, $y$ and $z$ coordinates. The $y$-order will be used for the construction of the segment tree. The $x$-order will be used for the plane sweep while the $z$-order will be used in depth computations. Then, we cut the scene into slabs. Each slab is defined by two planes normal to the $x$ axis, so that each plane contains $\frac{n}{\log{n}}$ vertical edges. Since there are exactly $2n$ vertical edges, the number of slabs will be $\log{n}$. The set of vertical edges in one slab is called $S_{v}$ while the set of horizontal segments that span the slab is denoted by $S_{h}$. Note that $|S_{v}|=O(\frac{n}{\log{n}})$ by definition, while $|S_{h}|=O(n)$ (an example is depicted in Figure~\ref{fig:example1}).

\begin{figure}
 \centering
 \includegraphics[scale=0.6]{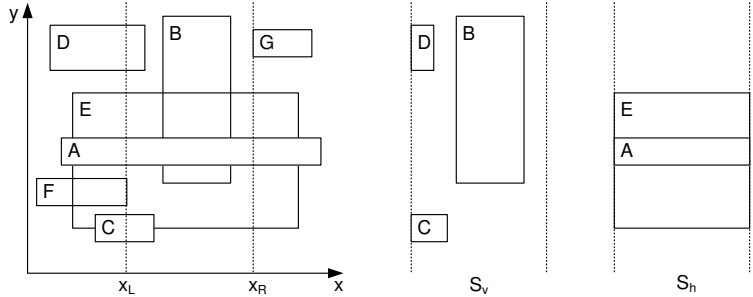}
 \caption{The rectangles as seen from $z=+\infty$ and the clipped rectangles that form sets $S_{v}$ and $S_{h}$.}
 \label{fig:example1}
\end{figure}

 In the beginning of each slab we build an enriched segment tree. In particular, let $T_{v}(u)$ denote the highest segment of $S_{v}$ associated with node $u$ of the segment tree. If there is no such segment then $T_{v}(u)=background$. Every node of the segment tree is augmented with the field $u.H_{h}$, which contains the highest segment of $S_{h}$ that spans $u.yrange$. Note that $u.H_{h}$ remains invariable during the sweep in each slab. Each node $u$ of the segment tree is also augmented with the following fields:
 
\begin{itemize}

 \item{$u.H=u.L=\max_{z}\{u.H_{h}, T_{v}(u)\}$, if $u$ is a leaf}
 \item{$u.H=\max_{z}\{lson(u).H, rson(u).H, T_{v}(u)\}$ and \\
       $u.L=\max_{z}\{\min_{z}\{lson(u).H,rson(u).H\},T_{v}(u)\}$, if $u$ is an internal node.}
\end{itemize}

 Let $S(u)$ be the set consisting of segments in the subtree rooted at $u$ and segments of $S_{h}$ associated with the ancestors of $u$.

\begin{lemma}
 Field $u.L$ is the lowest visible segment in the subscene restricted to $u.yrange$ consisting of segments of $S(u)$. Field $u.H$ is analogously the highest segment among the segments of $S(u)$.
\end{lemma}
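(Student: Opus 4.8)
The plan is to prove both statements together by structural induction on the segment tree, from the leaves upward, since the field $u.H$ feeds the recurrence for $u.L$. First I would fix what ``visible'' means: as the viewpoint lies at $z=+\infty$, at a fixed ordinate $y\in u.yrange$ the visible segment of the subscene $S(u)$ is the topmost one covering $y$, that is $V_u(y)=\max_{z}\{\,s\in S(u): [s.y_1,s.y_2]\ni y\,\}$. The two assertions then read $u.H=\max_{y\in u.yrange}V_u(y)$, which is just the highest segment of $S(u)$, and $u.L=\min_{y\in u.yrange}V_u(y)$, the lowest segment visible anywhere in the range. For the base case $u$ is a leaf, so $u.yrange$ is a single elementary interval; every segment of $S(u)$ is either the vertical one giving $T_v(u)$ or a horizontal segment of $S_h$ spanning the interval, the highest of these being $u.H_h$, and each covers the whole interval, so $V_u$ is constant and equals $\max_{z}\{u.H_h,T_v(u)\}$, matching the leaf definition $u.H=u.L$.

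For the inductive step I would first isolate the set identity
\[
S(u)=S(lson(u))\cup S(rson(u))\cup A_v(u),
\]
where $A_v(u)$ is the set of vertical segments of $S_v$ associated with $u$, whose highest element is $T_v(u)$. This is an equality, not merely a containment: every horizontal segment stored at $u$ reappears in each child as an ancestor-horizontal (a segment covers $u.yrange$ exactly when it is associated with $u$ or one of its ancestors), which is precisely why a child inherits only $S_h$, and not $S_v$, from its ancestors; hence the only members of $S(u)$ absent from both children's sets are the vertical segments at $u$ itself. Taking the highest element of both sides and applying the induction hypothesis to the children yields $u.H=\max_{z}\{lson(u).H,rson(u).H,T_v(u)\}$, which settles the claim for $u.H$.

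The core of the proof is the recurrence for $u.L$. For $y$ in the left half $lson(u).yrange$, no segment sitting in the subtree of $rson(u)$ has a $y$-range reaching $y$, so relative to $S(lson(u))$ the only extra members of $S(u)$ covering $y$ are the vertical segments at $u$; therefore $V_u(y)=\max_{z}\{V_{lson(u)}(y),T_v(u)\}$, and symmetrically for the right half. Minimising over each half and using the interchange $\min_{y}\max_{z}\{f(y),c\}=\max_{z}\{\min_{y}f(y),c\}$ (valid because $\max_z\{\cdot,c\}$ is monotone), then applying the same interchange once more to combine the two halves, gives
\[
u.L=\max_{z}\{\,\min_{z}\{lson(u).L,rson(u).L\},\,T_v(u)\,\},
\]
which by the induction hypothesis equals $\min_{y\in u.yrange}V_u(y)$, the lowest visible segment.

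I expect two delicate points. The first is pinning down the set identity: it rests entirely on the segment-tree decomposition property, and one must argue that horizontals at $u$ are genuinely double-counted into both children while verticals at $u$ are new, so that the union is exact. The second, and the real obstacle, is the precise form of the $u.L$ recurrence. The min--max interchange forces the children's \emph{lowest}-visible values to appear, since it is the lowest, not the highest, height attained over each half that governs $\min_y V_u(y)$; these agree with the children's $H$ fields at a leaf but diverge higher up (for instance an envelope of heights $10,1,1,10$ over four leaves has $\min_y V_u(y)=1$, whereas combining the two halves' maxima would return $10$). I would therefore verify carefully how the stated formula combines the children, confirming that the lowest-visible fields of $lson(u)$ and $rson(u)$ are the quantities that must enter the recurrence.
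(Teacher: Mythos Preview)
Your structural induction via the envelope $V_u$ is more explicit than the paper's argument, which is a short proof by contradiction: it takes a hypothetical visible segment $s'$ below $u.L$, associated with some descendant $w$, and walks up from $w$ toward $u$, asserting at each step that the parent's $L$ field must equal $s'$; reaching $u$ gives $u.L=s'$, a contradiction. That argument leans on the recurrence but never justifies its key step (why the minimum of the two children's fields at $father(w)$ should equal $s'$), and it also tacitly assumes $s'$ is a vertical segment sitting at a node, ignoring the horizontal contributions that only enter through the leaves. Your induction handles both cleanly.

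More importantly, your second ``delicate point'' is genuine: the paper's printed recurrence for $u.L$ combines the children's $H$ fields, and your four-leaf example with visible heights $10,1,1,10$ shows this cannot yield the lowest visible height (the paper's formula returns $10$ at the root, not $1$). The formula your min--max interchange derives,
\[
u.L=\max_{z}\bigl\{\min_{z}\{lson(u).L,\,rson(u).L\},\,T_v(u)\bigr\},
\]
is the one the lemma's statement actually requires, and it is also the only reading under which the paper's own walk-up argument makes sense, since what propagates upward there is a lowest-visible value, not a highest one. So you have not merely found a different route; you have located an apparent typo in the recurrence, and your induction proves the corrected version.
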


\begin{proof}
 Assume a segment $s'$ in the subtree of $u$ such that $s'$ is visible in the subscene restricted to $u.yrange$ and it is lower than $u.L$. Assume that $s'$ is associated with node $w$. Then, the minimum between $lson(father(w)).H$ and $rson(father(w)).H$ will be $s'$ (by assumption). If $T_{v}(father(w))\neq background$ then $s'$ is not visible since it is completely obscured by $T_{v}(father(w))$ which contradicts our assumption. As a result, $father(w).L$ will be $s'$. Applying the same procedure to all ancestors of $w$, we will reach $u$ and $u.L$ will be $s'$, which is a contradiction. As a result, $u.L$ is the lowest visible segment in $S(u)$. In the same way we can prove that $u.H$ is the highest segment among the segments of $S(u)$.
\end{proof}

 The sweep plane traverses the slab from left to right. When a vertical edge $s$ of a rectangle $R\in S_{v}$ is encountered, then the data structures must be appropriately updated and queried to report the visible regions located at the $yrange$ defined by this vertical edge. As a result, fields $u.H$ and $u.L$ are subject to changes. These changes are reflected to the nodes of the segment tree by removing or adding segments to their subtrees. Let $m$ be the number of insertions/deletions in the subtree rooted at a node $u$ and let $s_{1},s_{2},\ldots,s_{m}$ be the sequence of values of field $u.H$ after each such update operation. We precompute this sequence and store it in an array $u.High$ of $m+1$ entries. In a similar manner, we precompute the sequence $u.Low$ for field $u.L$. For both arrays we use a pointer $u.p$ that points to the current value of $u.H$ and $u.L$ in arrays $u.High$ and $u.Low$ respectively. An insertion or deletion of a segment in the subtree rooted at $u$ is simulated by incrementing the pointer $u.p$.
 
 The construction of $u.High$ and $u.Low$ for each node $u$ is based on the auxiliary sequences $u.Top_{v}$, $u.xTop_{v}$ and $u.xHL$. Assume that $x_{1},x_{2},\ldots,x_{m}$ is the ordered sequence of $x$ coordinates of segments of $S_{v}$ that are inserted or deleted to $u$. Assume that when $x=x_{i}$ a segment is inserted or deleted to $S(u)$ possibly changing $T_{v}(u)$ (the highest segment of $S_{v}$ associated to node $u$). The sequence $u.Top_{v}$ has $m+1$ entries that record all changes of $T_{v}(u)$. Specifically, $u.Top_{v}[i]=s$ if $T_{v}(u)=s$ for $x_{i}\leq x < x_{i+1}$ ($x_{0}$ corresponds to the $x$ coordinate of the left vertical plane which defines the current slab, $x_{m+1}$ is defined analogously for the right plane). The sequence $x_{1},x_{2},\ldots,x_{m}$ is stored in the array $u.xTop_{v}$. Similarly, let $x_{1},x_{2},\ldots,x_{p}$ be the sequence of the $x$ coordinates of segments of $S_{v}$ that are inserted into or deleted from nodes in the subtree rooted at $u$. After each insertion or deletion of such a segment the $H$ and $L$ fields of $u$ may change. This sequence is stored in $u.xHL$. The construction of all these sequences is feasible in linear time. This is proved by using the following lemma:
 
\begin{lemma}
 We are given a set of $q$ horizontal line segments in the $t-z$ plane with integer $t$ coordinates in the range $[0,2q-1]$. The segments are given in decreasing $z$ order. In $O(q)$ time it is possible to construct an array $A$ of $2q$ entries that stores in its $i$-th entry the segment with highest $z$ coordinate among segments that span the $t$-interval $[i,i+1]$.
\end{lemma} 

\begin{proof}
 The proof is given in \cite{bern,mehl}.
\end{proof}

\begin{lemma}
 Sequences $u.Top_{v}$ and $u.xTop_{v}$ for each node $u$ of the segment tree can be constructed in linear time. Sequences $u.H_{h}$ for each leaf $u$ of the segment tree can be constructed in linear time.
\end{lemma}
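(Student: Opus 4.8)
The plan is to recognise all three sequences as instances of the single primitive established in the previous lemma (the elementary-interval maximum): given horizontal segments in a $t$-$z$ plane, report for every elementary interval the covering segment of largest $z$. For a fixed node $u$ the value $T_v(u)$ is exactly such an upper envelope, but read along $x$ instead of $y$. Each rectangle $R$ whose $y$-interval is canonical at $u$ is active for $x\in[R_{x_1},R_{x_2}]$ clipped to the slab, and $T_v(u)$ at abscissa $x$ is the active canonical rectangle of largest $z$; hence the entire $x$-history of $T_v(u)$ is the highest-segment-per-interval array of the family $\{([R_{x_1},R_{x_2}],R_z)\}$. Feeding this family to the previous lemma yields $u.Top_{v}$, the sequence of values over the inter-event intervals, and, after translating the integer ranks back to real abscissae, $u.xTop_{v}$, the event abscissae. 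The field $u.H_{h}$ is the same primitive applied directly in the $y$-$z$ plane, since each $h\in S_{h}$ covers exactly the elementary intervals contained in $[h_{y_1},h_{y_2}]$, so the highest-segment-per-leaf array is precisely the collection of values $u.H_{h}$ over the leaves.

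First I would compute, for every vertical edge, its $O(\log n)$ canonical nodes; over the slab this costs $O(|S_{v}|\log n)=O(n)$ and gives each node $u$ its multiset of $m_{u}$ canonical rectangles. Because the previous lemma requires its input in decreasing $z$-order and with endpoint coordinates compressed into $[0,2m_{u}-1]$, I would obtain both orderings from a single global presorting rather than node by node. Distributing the globally $z$-sorted edges into their canonical nodes leaves every per-node list in decreasing $z$-order automatically; and one pass over all endpoints in global $x$-order, keeping a counter per node and stamping each endpoint with the current counter value before incrementing it, assigns to each endpoint its local rank in $[0,2m_{u}-1]$ while respecting the $x$-order. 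Each node's instance then meets the hypotheses of the previous lemma and runs in $O(m_{u})$ time, and since $\sum_{u} m_{u}=O(|S_{v}|\log n)=O(n)$ the arrays $u.Top_{v}$ and $u.xTop_{v}$ are built in linear time. For the fields $u.H_{h}$ I would make one invocation of the same primitive in the $y$-$z$ plane, taking the leaves as the elementary intervals: they already carry consecutive integer indices in $y$-order, the decreasing $z$-order is inherited from the global $z$-sort, and each of the $|S_{h}|=O(n)$ horizontal segments spans a contiguous band of leaves, so the whole array of leaf values is produced in time proportional to the number of leaves plus $|S_{h}|$, i.e.\ $O(n)$.

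The heart of the argument is the time bound, not the correctness, which follows from the previous lemma together with the identification of $T_v(u)$ as an upper envelope. Invoking the previous lemma separately at every node is unavoidable, but the naive implementation that sorts each node's rectangles by $z$ and compresses its coordinates independently would cost $\sum_{u} O(m_{u}\log m_{u})=O(n\log n)$ and destroy the bound. The hard part is therefore to perform all sorting once, globally, and to hand each node its data already in the normalised form the previous lemma expects; the delicate step is the single $x$-ordered pass that, using one counter per node, produces the contiguous local ranks in $[0,2m_{u}-1]$ for all nodes simultaneously, which is exactly what lets the many small invocations run in $O(n)$ time overall.
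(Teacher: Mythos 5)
Your proof is correct, and it is essentially the intended argument: the paper itself offers no proof of this lemma, deferring entirely to \cite{bern}, and what you have written is a faithful reconstruction of that cited construction --- identify $u.Top_{v}$ as the upper envelope (in $z$) of the canonical rectangles' clipped $x$-intervals, identify the leaf values $u.H_{h}$ as the same envelope primitive applied in the $y$--$z$ plane, and feed both to the previous lemma after a single global $z$-sort and a single $x$-ordered rank-compression pass distributed over canonical nodes, so that $\sum_{u} m_{u} = O(|S_{v}|\log n) = O(n)$ per slab. Your explicit treatment of the delicate point (avoiding per-node sorting, which would give $O(n\log n)$ per slab) is exactly what makes the linear bound work and is strictly more informative than the paper's citation.
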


\begin{proof}
 The proof is given in \cite{bern}.
\end{proof}

\begin{lemma} \label{lem:con-seg}
 The sequences $High$, $Low$ and $u.xHL$ for all nodes can be computed in $O(n)$ time.
\end{lemma}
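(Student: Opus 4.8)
The goal is to bound the total construction time for all the $u.High$, $u.Low$, and $u.xHL$ arrays by $O(n)$. The natural approach is to prove two things: first, that the total size of all these arrays across the whole segment tree is $O(n)$, and second, that each array can be filled in time linear in its size (plus the sizes of its children's arrays), using the auxiliary sequences $u.Top_v$ and $u.xTop_v$ that Lemma~2 above already lets us build in linear time. The plan is to proceed bottom-up through the segment tree, merging information from the two children of each internal node.

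**Bounding the total size.** First I would account for the total length of the $u.xHL$ sequences. By definition, an $x$-coordinate appears in $u.xHL$ exactly when a segment of $S_v$ is inserted into or deleted from some node in the subtree rooted at $u$. A single vertical segment of $S_v$ is associated with $O(\log n)$ nodes of the segment tree (the standard segment-tree allocation property stated in the preliminaries), and each such association contributes an insertion and a deletion event. Crucially, each association event at a node $w$ propagates up the path from $w$ to the root, so it can appear in the $xHL$ sequence of every ancestor of $w$. Naively this would give $O(n \log^2 n)$, so the key observation I expect to need is that for a fixed segment the set of nodes to which it is \emph{allocated} lies on $O(\log n)$ root-to-node paths whose union has total length $O(\log n)$ --- more precisely, that summing the allocation events over all slabs and all ancestors telescopes correctly. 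Since $|S_v| = O(n/\log n)$ in each slab and there are $\log n$ slabs, I would argue the per-slab contribution is $O(n)$ and hence $O(n)$ overall, exploiting the small slab size to absorb the logarithmic path lengths.

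**Filling the arrays.** Given the auxiliary sequences, each entry of $u.High$ is determined by the recurrence $u.H = \max_z\{lson(u).H, rson(u).H, T_v(u)\}$ and each entry of $u.Low$ by the corresponding $L$-recurrence from the definition. The construction is then a merge: I would step through the merged, $x$-sorted sequence $u.xHL$, which interleaves the update events of the two children's $xHL$ sequences with the events in $u.xTop_v$ that change $T_v(u)$. At each event I recompute $u.H$ and $u.L$ from the current values of $lson(u).H$, $rson(u).H$, and $T_v(u)$ in constant time, appending to $u.High$ and $u.Low$. This costs time proportional to $|u.xHL|$, and summing over all nodes gives the same $O(n)$ bound established for the total array size. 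The values $u.H_h$ needed to initialize the leaves are available in linear time by Lemma~2.

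**The main obstacle.** The delicate point is the size bound, not the merging. I expect the real work to be showing that propagating each allocation event up to \emph{every} ancestor of the allocated node does not blow the total past $O(n)$. The saving comes entirely from the slab decomposition: because each slab holds only $O(n/\log n)$ vertical edges, each contributing $O(\log n)$ allocations each of which climbs a path of length $O(\log n)$, the per-slab total is $O((n/\log n)\cdot \log n) = O(n)$ events when one argues the ancestor paths share prefixes and telescope, and multiplying by the $\log n$ slabs would give $O(n \log n)$ unless the propagation cost is charged more carefully. Getting this accounting exactly right --- reconciling the naive $O(n\log n)$ estimate with the claimed $O(n)$ --- is where I would focus the argument, likely by charging each event to the leaf-level elementary interval it ultimately affects rather than to each ancestor separately.
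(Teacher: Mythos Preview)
The paper does not actually prove this lemma; it defers entirely to Bern~\cite{bern} and only appends the one-line observation that $u.H_h$ for internal nodes can be filled bottom-up via $u.H_h=\max\{lson(u).H_h,\,rson(u).H_h\}$. Your sketch is therefore more of a proof than the paper gives, and its two ingredients---bounding $\sum_u |u.xHL|$ and then filling $u.High$ and $u.Low$ by a bottom-up merge of the children's sequences with $u.Top_v$---are precisely the argument in Bern's paper.

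Where you go wrong is in the ``main obstacle'' paragraph, and it is a scope confusion rather than a missing idea. The lemma is a \emph{per-slab} statement: the segment tree and all its auxiliary arrays are rebuilt from scratch at the start of every slab, and the claim is $O(n)$ time for one such construction. Your own calculation already establishes this. A single vertical edge is allocated to $O(\log n)$ canonical nodes, and those canonical nodes all lie on the two root-to-leaf paths determined by the edge's $y$-endpoints; hence the union of their ancestors also has size $O(\log n)$. One insertion or deletion event therefore contributes $O(\log n)$ entries to $\sum_u |u.xHL|$, and with $|S_v|=O(n/\log n)$ events in the slab the total is $O(n)$. That is the entire size argument; no further ``careful charging'' or telescoping is required. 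When you then multiply by $\log n$ slabs and obtain $O(n\log n)$, you are computing the aggregate preprocessing cost over the whole sweep, which is exactly what Theorem~1 uses---it is not a bound you need to beat. So the obstacle you identify does not exist: commit to the path-union observation you already stated, drop the cross-slab summation, and the proof is complete.
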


\begin{proof}
 The proof for this lemma is given in \cite{bern}. We must add that during the construction of the segment tree the filling of the $u.H_{h}$ fields for all inner nodes $u$ can be accomplished by setting: $u.H_{h}=\max\{lson(u).H_{h},rson(u).H_{h}\}$
\end{proof}

 Apart from the segment tree we use an auxiliary leaf-oriented balanced binary search tree, which we call the {\em region tree}. The region tree $T$ is used to store the horizontal segments of the scene, ordered according to the $y$ coordinate, as well as the necessary information to report visible regions. A horizontal segment is stored in a leaf of $T$ only when it is intersected by the sweep plane and it is visible. The leaves of this tree form a double linked list. The region tree is dynamic red-black tree \cite{red} and remains the same during the transition between slabs. 
 
 In the region tree, the visible horizontal segments (edges of rectangles) are inserted or deleted during the transition of the sweep plane between the sweep stations. Apart from the edges, the leaves of the tree store the rectangle in which the area between two consecutive visible horizontal segments belongs to. Assume that $f$ is a leaf of $T$, $right(f)$ is the leaf immediately right to $f$ and $f.region$ is the rectangle which owns the region between the horizontal segment stored in $f$ and $right(f)$. If the area between two consecutive segments is not part of a rectangle, then $f.region=background$. The traversal of the double linked list enables us to report visible regions. These regions are defined by the $y$ coordinates of two consecutive horizontal edges and the $x$ coordinates of the start of the region and the current position of the sweep plane. The $x$ coordinate of the start of the region is a field attached to $f.region$ and contains the $x$ coordinate of the sweep station where the field $f.region$ was updated for the last time. In the following, we will analyze the reporting procedure and specify how the region tree is updated.

\subsection{The Reporting Stage}

 At this point we will focus on a single slab. The necessary initialization for each slab is described in \ref{sub:preprocessing}. The description of the reporting stage is split into two parts. In the first part we explore the case where the sweep plane intersects the left edge of a rectangle while in the second part we explore the case where the sweep plane intersects the right edge of a rectangle. First, we are going to explore what happens when a new edge in virtually (due to the preprocessing) inserted in the segment tree - that is the sweep plane intersects a left edge.

\subsubsection{Insertion of a New Edge}

 The procedure $LeftEdge(R,true,root)$ is invoked when the left edge of a rectangle $R$ is encountered. This procedure updates the fields $u.H$ and $u.L$ for all nodes $u$ visited in the segment tree. Assume that the position of the sweep plane is at $x=x_{s}$. In Figures~\ref{fig:left} and \ref{fig:left1} a description of the algorithm for the insertion of the left edge of an arbitrary rectangle $R$ is given.
 
\begin{figure}
{\scriptsize
\begin{tabbing}
\quad \=\quad \=\quad \=\quad \=\quad \=\quad \=\quad \kill
\keyw{Procedure} LeftEdge(Rectangle $R$, boolean $visible$, segment\_tree\_node $u$)  \\
1. \> \> \keyw{if} $(R.z<u.L)$ \keyw{then} $visible=FALSE$ \\
2. \> \> \keyw{if} $(R.y_{1}<u.y_{1})$ AND $(u.y_{2}<R.y_{2})$ \keyw{then} \\
3. \> \> \> \keyw{if} visible \keyw{then} LeftReportRegions($R$,$u$) \\
4. \> \> \keyw{else} \\
5. \> \> \> \keyw{if} $(R.y_{1}<u.y_{mid})$ \keyw{then} LeftEdge($R$,$visible$,$lson(u)$) \\
6. \> \> \> \keyw{if} $(R.y_{2}>u.y_{mid})$ \keyw{then} LeftEdge($R$,$visible$,$rson(u)$) \\
7. \> \> $u.p=u.p+1$
\end{tabbing}
}
 \caption{This procedure is invoked when the left edge of a rectangle $R$ is encountered.}
 \label{fig:left}
\end{figure}

\begin{figure}
{\scriptsize
\begin{tabbing}
\quad \=\quad \=\quad \=\quad \=\quad \=\quad \=\quad \kill
\keyw{Procedure} LeftReportRegions(Rectangle $R$, segment\_tree\_node $u$)  \\
1. \> \> \keyw{if} $(R.z<u.L)$ \keyw{then return} \\
2. \> \> \keyw{if} $(u.H<R.z)$ \keyw{then} \\
3. \> \> \> Find leaves $p$ and $q$ in $T$ so that $p.y<u.y_{1}$ and there is no other leaf $v$ such that $p.y<v.y<u.y_{1}$ \\ 
\> \> \> and act analogously for $u.y_{2}$ \\
4. \> \> \> Output regions formed by the horizontal segments found between $p.y_{1}$ and $q.y_{2}$ \\
5. \> \> \> Remove all horizontal segments found between $p.y_{1}$ and $q.y_{2}$ \\
6. \> \> \> \keyw{if} $(u.y_{1}=R.y_{1})$ \keyw{then} insert in $T$ the segment $s$ with $s.x_{1}=R.x_{1}$ and $s.y=u.y_{1}$ \\
7. \> \> \> \keyw{if} $(u.y_{2}=R.y_{2})$ \keyw{then} insert in $T$ the segment $s$ with $s.x_{1}=R.x_{1}$ and $s.y=u.y_{2}$ \\
8. \> \> \> Update properly the fields region of $p$, $q$ and the newly inserted leaves and then change the $x$ \\ 
\> \> \> field of each such field to be current $x$ \\
9. \> \> \keyw{else} \\
10. \> \> \> LeftReportRegions($R$,$lson(u)$) \\
11. \> \> \> LeftReportRegions($R$,$rson(u)$) 
\end{tabbing}
}
 \caption{This procedure is invoked by procedure $LeftEdge$ to report the visible regions.}
 \label{fig:left1}
\end{figure}

 Assume that the left edge of $R$ is divided into consecutive invisible and visible segments. Let the visible segments be $s(u_{1}),s(u_{2}),\ldots,s(u_{l})$ and the invisible segments $s(w_{1}),s(w_{2}),\ldots,s(w_{m})$, where $u_{i}$ and $w_{i}$ are nodes of the segment tree. The procedure given in Figure~\ref{fig:left} stops the recursive search when one of these nodes is reached. When a node $u_{i}$ is reached, the region tree $T$ is queried with the range $u_{i}.yrange$. The result of this query are two leaves $f_{1}$ and $f_{2}$. Note that it may be the case that $f_{1}$ and $f_{2}$ are the same leaves. By using the double linked list, all leaves between $f_{1}$ and $f_{2}$ are traversed to report the visible regions. After reporting all visible regions that are obscured by the new rectangle, we remove all leaves between $f_{1}$ and $f_{2}$. Finally, we make the necessary adjustments to reflect the fact that this region belongs to the new rectangle. 
 
 In particular, if $f_{1}$ and $f_{2}$ are different leaves then we remove all leaves between them since the new rectangle $R$ will obscure the rectangles they represent. Then, if the $y$-coordinates of the upper and lower edges of $R$ do not belong in $u_{i}.yrange$, we just update the field region between $f_{1}$ and $f_{2}$ so that it belongs to $R$. In any other case we must create a new leaf for either the upper or the lower edge or both updating appropriately the region fields. If $f_{1}$ and $f_{2}$ are the same leaf, then either we have to insert the upper or the lower edge of $R$ or there is a node $u_{i+1}$ such that $u_{i}.yrange$ and $u_{i+1}.yrange$ are adjacent and the same cases apply. The following lemma is essential in the construction of the region tree.
 
\begin{lemma} \label{lem:region}
 A visible region in the region tree is defined by the $x$ coordinate of its insertion, the $x$ coordinate of the sweep plane and the $y$ coordinates of two edges of rectangles.
\end{lemma}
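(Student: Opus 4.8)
The plan is to prove this structurally, by first showing that any visible region must be an iso-oriented rectangle, and then setting up an invariant on the region tree from which the four defining quantities can be read off directly. First I would argue the geometric shape: since every object in the scene is an iso-oriented rectangle and the viewpoint lies at $z=+\infty$, the frontmost visible surface at a point $(x,y)$ is the rectangle of maximum $z$ among those whose projection onto the $x$-$y$ plane contains $(x,y)$. As $(x,y)$ varies, this frontmost rectangle can change only when $(x,y)$ crosses an edge of some rectangle, and all such edges are horizontal or vertical. Hence each maximal connected planar piece on which a fixed rectangle $R$ is the frontmost visible surface is bounded by horizontal and vertical edges, and each maximal rectangular sub-piece is delimited by two vertical edges in $x$ and two horizontal edges in $y$.

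Next I would state the invariant maintained by the sweep within a slab. For any two consecutive leaves $f$ and $right(f)$ of the region tree, the claim is that the horizontal strip $f.y \le y \le right(f).y$, restricted to the $x$-interval from the value stored in the $x$ field attached to $f.region$ (the last sweep station at which $f.region$ was set) up to the current sweep position, is owned entirely by the single rectangle $f.region$ and is currently visible. The two $y$-bounds $f.y$ and $right(f).y$ are the values stored in the leaves, and by the insertion steps of $LeftReportRegions$ (lines 6--7) these are always $y$-coordinates of horizontal edges of rectangles. The left $x$-bound is exactly the insertion coordinate recorded in the $x$ field of $f.region$, and the right $x$-bound is the current position of the sweep plane. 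These are precisely the four quantities claimed in the statement.

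The core of the argument is to verify that this invariant is preserved across sweep events. Between two consecutive horizontal segments no horizontal edge crosses the strip (by definition of consecutive leaves), so the frontmost visible rectangle over the whole strip can change only when the sweep crosses a vertical edge whose $y$-range overlaps the strip. Every such crossing, however, is a sweep station handled by an insertion (left-edge) or deletion (right-edge) event, at which the region fields are updated: when a newly encountered rectangle $R$ with higher $R.z$ obscures the strip, the currently open region is reported and closed with right boundary equal to the current $x$, and a fresh region owned by $R$ is opened with its insertion $x$ reset to the current sweep position (the update steps at lines 6--8). Thus, over the lifetime of a single region, its owner and its two horizontal $y$-bounds stay fixed, its left $x$-bound stays fixed at the insertion coordinate, and only its right $x$-bound advances with the sweep plane, so when the region is finally reported it is exactly the rectangle determined by these four values.

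I expect the main obstacle to be this maintenance step, specifically ruling out that some intermediate vertical edge silently alters the frontmost surface over part of the strip without triggering an update that closes or splits the region. To close this gap I would appeal to the first lemma of Section~\ref{sec:prel}, which guarantees that $u.H$ and $u.L$ track the highest and lowest visible segments over each $y$-range; this ensures the reporting procedure fires at exactly those sweep stations where visibility over a $y$-range genuinely changes, so that every region emitted to the output has constant ownership and the claimed rectangular form. The remainder is then bookkeeping: confirming that each of the four bounds is stored in, or directly available from, the region tree at the moment the region is reported.
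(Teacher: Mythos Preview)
Your proposal is correct, but it does far more work than the paper's own proof, and in a different direction.

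The paper's proof is two sentences: the $x$-bounds are declared ``trivially'' the insertion coordinate and the current sweep position; for the $y$-bounds, the paper simply observes that every rectangle has a single $z$-coordinate, so two rectangles can intersect only along their edges, and hence the $y$-boundary of a visible region must lie on a horizontal edge of some rectangle. That geometric observation is essentially your first paragraph. What you add in the remaining paragraphs is an invariant-maintenance argument: that the region tree, as updated by $LeftReportRegions$ and the deletion procedure, actually tracks these regions correctly across sweep events, with an appeal to the $u.H$/$u.L$ lemma to guarantee that no visibility change goes undetected. This is a stronger, algorithmic correctness statement, and the paper does not attempt it here; it treats the lemma as a fact about the geometry of the scene, not as a claim that the data structure faithfully represents it. Your route is more thorough and arguably closer to what the phrase ``in the region tree'' literally demands, but the paper is content with the geometric core and leaves the invariant you verify implicit in the surrounding algorithm description.
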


\begin{proof}
 This is trivially true for the $x$ coordinates. We have to show that each visible region is defined between two horizontal edges. This is true since each rectangle is characterized by only one $z$ coordinate. As a result, rectangles will always intersect among their edges.
\end{proof}

\begin{figure}
 \centering
 \includegraphics[scale=0.6]{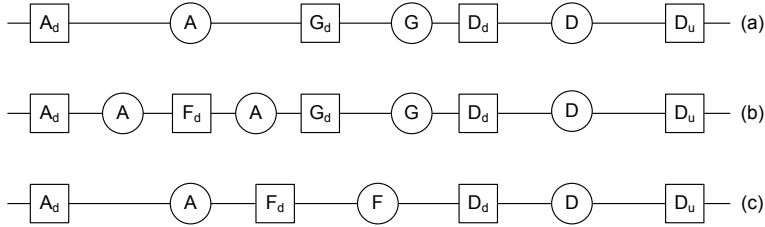}
 \caption{Consecutive parts of the list of leaves of the region tree $T$ before and after the insertion of a horizontal segment $F_{d}$. The circles designate the owner of the region between the edges (field $region$).}
 \label{fig:dlistleft}
\end{figure}

 We give an example of how the region tree is handled. We assume that uppercase letters designate rectangles. Indexes $u$ and $d$ in uppercase letters designate the upper and lower edge of the specified rectangle respectively. In Figure~\ref{fig:dlistleft}, a part of the double linked list of the leaves of the region tree $T$ is depicted. Initially, the sweep plane is at position $x=x_{s-1}$ and the list of the leaves of $T$ is shown in Figure~\ref{fig:dlistleft}(a). The next sweep station of the sweep plane is at $x=x_{s}$. In this position, the sweep plane intersects the left edge of the rectangle $F$. Assume that the range $[F.y_{1},F.y_{2}]$ is between the leaves of $T$  with labels $A_{d}$ and $D_{u}$ as depicted in Figure~\ref{fig:dlistleft}(a). Finally, assume that $A.z < G.z < F.z < D.z$ and that $A.y_{1} < F.y_{1} < G.y_{1}$ and $D.y_{1}<F.y_{2}<D.y_{2}$. Therefore, only the lower edge of $F$ will be inserted in $T$ yielding the list depicted in Figure~\ref{fig:dlistleft}(b). 
 
 The visible regions reported are:
 
\begin{enumerate}
 
 \item{The region of $A$ defined by $F.y_{1}-G.y_{1}$ and $A.x-x_{s}$ ($A.x$ is the $x$ coordinate of the start of the region).}
 
 \item{The region of $G$ defined by $G.y_{1}-D.y_{1}$ and $G.x-x_{s}$.}

\end{enumerate}

 Finally, the regions reported are removed from the region tree $T$ resulting in the list of Figure~\ref{fig:dlistleft}(c). Note that the algorithm given in Figures \ref{fig:left} and \ref{fig:left1} would identify all nodes of the segment tree such that the union of their $y$-ranges would be equal to $[F.y_{1},D.y_{1}]$. Then, the region tree would be updated for each such node.

 The crucial observation in the analysis of procedure $LeftEdge$ is that procedure $LeftReportRegions$ stops its recursive search whenever it reaches one of the $u_{i}$ or $w_{i}$ nodes (Figure~\ref{fig:leftex}(a)). As a result, even a visible segment of the left edge hidden behind a complicated part of the scene costs only $O(\log{n})$ to discover. Procedure $LeftReportRegions$ explores a forest of subtrees of the segment tree. The roots of these subtrees are nodes that list $R$ ($R$ is associated with these nodes), the nodes $u_{1},u_{2},\ldots,u_{l}$ are leaves of these subtrees and the remaining leaves are nodes $w_{1},w_{2},\ldots,w_{m}$ (the proof of this argument can be found in \cite{bern}). For each of the $u_{i}$ nodes we search the tree $T$ in $O(\log{n})$ time locating leaves $p$ and $q$ (the search keys are the $y$ coordinates of the endpoints of the $u_{i}.yrange$). When we find the leaves we may insert at most two new horizontal segments ($O(1)$ amortized time), then report all the regions which are defined between $p$ and $q$ in the double linked list and finally remove all the reported regions while updating the new region. The deletion of these regions (leaves) also incurs an $O(1)$ amortized time cost per leaf. As a result, the cost for each node $u_{i}$ is $O(k_{i}+\log{n})$, where $k_{i}$ is the number of reported regions. 
 
\begin{lemma} \label{lem:left}
 The procedure $LeftEdge$ requires $O(k\log{n})$ time to report $k$ visible regions for a scene of $n$ rectangles taking only into account the left edges.
\end{lemma}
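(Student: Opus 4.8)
The plan is to bound the running time by summing the per-node cost over exactly the nodes that $LeftReportRegions$ visits, and then to show that this sum collapses to $O(k\log n)$. First I would recall the forest structure established above: the descent started by $LeftEdge$ reaches the $O(\log n)$ nodes that list $R$, and from each such node $LeftReportRegions$ explores a full binary subtree whose leaves are precisely the visible segments $u_{1},\dots ,u_{l}$ (those with $u_{i}.H<R.z$) and the invisible segments $w_{1},\dots ,w_{m}$ (those with $R.z<w_{i}.L$). At every internal (``mixed'') node only $O(1)$ work is done before recursing into both children, at every $w_{i}$ the call returns in $O(1)$ time, and the only expensive leaves are the $u_{i}$, where we pay $O(k_{i}+\log n)$ as argued in the preceding paragraph: $O(\log n)$ to locate the bounding leaves $p,q$ in the region tree, $O(k_{i})$ amortized to output and delete the $k_{i}$ regions between them, and $O(1)$ amortized to insert the at most two new edges of $R$.

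The key step is the inequality $l\le k$. Each $u_{i}$ is a range over which $R$ lies above everything currently in the scene, so when we query the region tree with $u_{i}.yrange$ the interval between $p$ and $q$ contains at least one region that was visible until now and is therefore reported (and then removed); hence $k_{i}\ge 1$. Since the ranges $u_{i}.yrange$ are pairwise disjoint no region is reported twice, so $\sum_{i=1}^{l}k_{i}=k$. Combining these, $l\le\sum_{i}k_{i}=k$, and therefore the reporting work is $\sum_{i=1}^{l}O(k_{i}+\log n)=O\!\left(k+l\log n\right)=O(k\log n)$.

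The hard part will be to check that the remaining, non-reporting work — the internal mixed nodes and the invisible leaves $w_{i}$ — also stays within $O(k\log n)$, since none of these nodes is paid for by a reported region. Here I would argue from the interval structure of the edge: the visible pieces of the left edge form at most $k$ maximal intervals (each charged to a distinct reported region, as above), so the invisible pieces form at most $k+1$ maximal intervals and the transitions between visible and invisible number $O(k)$. Two distinct nodes on the same level of the segment tree have disjoint $y$-ranges, and a mixed node must contain a transition in its range, so each of the $O(\log n)$ levels holds at most $O(k)$ mixed nodes; hence the explored forest has $O(k\log n)$ mixed nodes. Being a full binary forest with $O(\log n)$ roots, it then has $O(k\log n)$ leaves as well, so $m=O(k\log n)$ and the whole exploration costs $O(k\log n)$.

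Finally I would combine the contributions. The descent performed by $LeftEdge$ itself only touches the $O(\log n)$ listing nodes and their ancestors to advance the pointers, the exploration of mixed and invisible nodes costs $O(k\log n)$, and the reporting at the visible leaves costs $O(k+l\log n)=O(k\log n)$ by the inequality $l\le k$. Summing these yields the claimed $O(k\log n)$ bound for reporting $k$ visible regions from the left edges.
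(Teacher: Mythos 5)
Your proposal is correct and takes essentially the same route as the paper: the per-node cost $O(k_{i}+\log n)$ at each visible node $u_{i}$, the key inequality $l\le k$ (the paper's ``the number of reported regions is at least equal to the number of reported segments''), and the resulting charge of $O(\log n)$ per reported region, summed over all rectangles. The only substantive difference is that you explicitly bound the traversal cost of the mixed internal nodes and the invisible leaves $w_{i}$ by a transition-counting argument, a step the paper leaves implicit in the forest-structure remark it borrows from Bern; your $O(k\log n)$ bound on these nodes is weaker than the true $O(l+\log n)$ bound (every mixed node $u$ has $u.L<R.z$, which forces one child's $H$ field below $R.z$, so every mixed node has a reporting child), but it is valid and suffices for the claimed bound.
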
 

\begin{proof}
 Assume that the visible segments of the left edge of a rectangle $R_{j}$ are $s(u_{1}),s(u_{2}),\ldots,s(u_{l_{j}})$ and the invisible segments are $s(w_{1}),s(w_{2}),\ldots,s(w_{m})$, where $u_{i}$ and $w_{i}$ are nodes of the segment tree. The discovery of the $u_{i}$ nodes is achieved in $O(\log{n})$ time. For every node $u_{i}$, $O(\log{n})$ time is required to search the region tree. The cost of a single edge will be $O(\sum_{i=1}^{l_{j}}{O(\log{n})}+k_{j})$, where $k_{j}$ is the number of reported regions. This means that the cost for each visible segment in the segment tree is $O(\log{n})$ while the cost for each reported region is amortized $O(1)$. Since the number of reported regions is at least equal to the number of reported segments we assume that the cost for each region is $O(\log{n})$.
 
 Thus, for all rectangles the time complexity will be:
 
 \[ \sum_{j=1}^{n}O(l'_{j}\log{n})=O(k\log n) \]
where $k=\sum_{j=1}^{n}{l'_{j}}$ is the number of visible regions.
\end{proof}

 \begin{figure}
 \centering
 \includegraphics[scale=0.8]{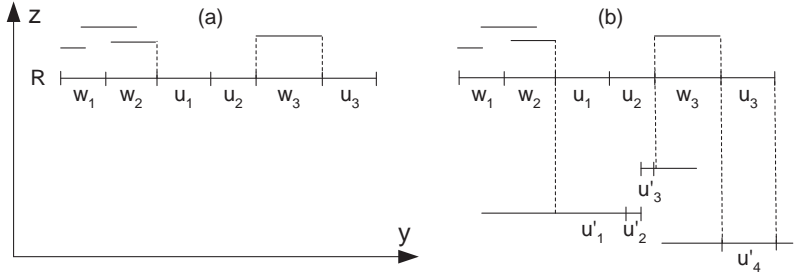}
 \caption{(a) $R$'s left edge divided into visible ($u_{i}$) and invisible ($w_{i}$) pieces, (b) $R$'s right edge and revealed rectangles beneath $R(u'_{i})$.}
 \label{fig:leftex}
\end{figure}

\subsubsection{Deletion of an Edge}

The procedure $RightEdge(R,true,background,root)$ is invoked when the right edge of a rectangle $R$ is encountered. This procedure is depicted in Figures~\ref{fig:right} and \ref{fig:right1}.

\begin{figure}
{\scriptsize
\begin{tabbing}
\quad \=\quad \=\quad \=\quad \=\quad \=\quad \=\quad \kill
\keyw{Procedure} RightEdge(Rectangle $R$, boolean $visible$, Rectangle $R'$, segment\_tree\_node $u$)  \\
1. \> \> \keyw{if} $(R.z<u.L)$ \keyw{then} $visible=FALSE$ \\
2. \> \> \keyw{if} (($R.y_{1}\leq u.y_{1}$) AND ($u.y_{2} \leq R.y_{2}$)) \keyw{then} \\
3. \> \> \> \keyw{if} $(R'.z<u.Top_{v}[u.p].z)$ \keyw{then} $R'=u.Top_{v}[u.p]$ \\
4. \> \> \> \keyw{if} $(R'.z < u.H_{h}.z)$ \keyw{then} $R'=u.H_{h}$\\
5. \> \> \> \keyw{if} visible \keyw{then} RightReportRegions($R$,$visible$,$R'$,$u$) \\
6. \> \> \keyw{else} \\
7. \> \> \> \keyw{if} $(R'.z<u.Top_{v}[u.p].z)$ \keyw{then} $R'=u.Top_{v}[u.p]$ \\
8. \> \> \> \keyw{if} $(R'.z < u.H_{h}.z)$ \keyw{then} $R'=u.H_{h}$\\
9. \> \> \> \keyw{if} $(R.y_{1}<u.y_{mid})$ \keyw{then} RightEdge($R$,$visible$,$R'$,$lson(u)$) \\
10. \> \> \> \keyw{if} $(R.y_{2}>u.y_{mid})$ \keyw{then} RightEdge($R$,$visible$,$R'$,$rson(u)$) \\
11. \> \> $u.p=u.p+1$
\end{tabbing}
}
 \caption{This procedure is invoked when the right edge of a rectangle $R$ is encountered.}
 \label{fig:right}
\end{figure}

\begin{figure}
{\scriptsize
\begin{tabbing}
\quad \=\quad \=\quad \=\quad \=\quad \=\quad \=\quad \kill
\keyw{Procedure} RightReportRegions(Rectangle $R$, boolean $atR$, Rectagle $R'$, segment\_tree\_node $u$)  \\
1. \> \> \keyw{if} $(R.z<u.L)$ \keyw{then return} \\
2. \> \> \keyw{if} ($(u.H<R.z)$ AND $atR$) \keyw{then} \\
3. \> \> \> Find leaves $p$ and $q$ in $T$ so that $p.y<u.y_{1}$ and there is no other leaf $v$ such that $p.y<v.y<u.y_{1}$ \\ 
\> \> \> and act analogously for $u.y_{2}$ \\
4. \> \> \> Output region defined by $p.y_{1}$, $q.y_{2}$, $x$ coordinate of the sweep plane and $x$ coordinate of the field \\
\> \> \> showing the start of the region \\
5. \> \> \> Output regions defined by $p$ and $q$ and the leaves next to them in the list\\
6. \> \> \> \keyw{if} $(p.y_{1}=R.y_{1})$ \keyw{then} delete $p$ \\
7. \> \> \> \keyw{if} $(q.y_{2}=R.y_{2})$ \keyw{then} delete $q$ \\
8. \> \> \> Update properly the fields region of the adjacent leaves and the respective $x$ fields \\
9. \> \> \keyw{if} $(R'.z<u.Top_{v}[u.p].z)$ \keyw{then} $R'=u.Top_{v}[u.p]$ \\
10. \> \> \keyw{if} $(R'.z<u.H_{h}.z)$ \keyw{then} $R'=u.H_{h}$ \\
11. \> \> \keyw{if} $(u.H<R'.z)$ \keyw{then} \\
12. \> \> \> \keyw{if} $(u.y_{1}=R'.y_{1})$ \keyw{then} insert in $T$ the segment $s$ with $s.x_{1}=$ current $x$ and $s.y=R'.y_{1}$ \\
13. \> \> \> \keyw{if} $(u.y_{2}=R'.y_{2})$ \keyw{then} insert in $T$ the segment $s$ with $s.x_{1}=$ current $x$ and $s.y=R'.y_{2}$ \\
14. \> \> \> Find leaves $p$ and $q$ of $T$ such that $p.y \leq R'.y_{1} < R'.y_{2} \leq q.y$ and update field $p.region$ and the $x$ field \\
15. \> \> \keyw{else} \\
16. \> \> \> RightReportRegions($R$,$atR$,$R'$,$lson(u)$) \\
17. \> \> \> RightReportRegions($R$,$atR$,$R'$,$rson(u)$) 
\end{tabbing}
}
 \caption{This procedure is invoked by procedure $RightEdge$ to report the visible regions.}
 \label{fig:right1}
\end{figure}

 It is not hard to verify that the procedure $RightEdge$ updates appropriately the pointer $u.p$ while maintaining the visible segments (by using the flag $visible$). In Figure~\ref{fig:leftex}(b) the case handled by procedure $RightReportRegions$ is depicted. As before, we assume that the right edge of $R$ is divided into consecutive invisible and visible segments. Assume that the visible segments are $s(u_{1}),s(u_{2}),\ldots,s(u_{l})$ and the invisible segments are $s(w_{1}),s(w_{2}),\ldots,s(w_{m})$, where $u_{i}$ and $w_{i}$ are nodes of the segment tree. In addition, the visible pieces of rectangles along and below the right edge of $R$ are divided into basic segments $s(u'_{1}),s(u'_{2}),\ldots,s(u'_{r})$. The procedure $RightReportRegions$ is analogous to $LeftReportRegions$ except that it continues exploring below nodes $u_{i}$ to discover the new visible pieces. 
 
 This procedure maintains rectangle $R'$ to be the second highest rectangle after $R$ listed on the path from the root to the current node $u$. Each node $u'_{j}$ is a descendant of a visible node $u_{i}$, such that $u'_{j}.H$ is lower than the highest rectangle along the path from the root to $u'_{j}$. Therefore, we are in position to appropriately update $T$ with information we have obtained concerning the revealed rectangle $R'$ (note that $R'$ may as well be $background$). Thus, procedure $RightReportRegions$ explores a forest of trees of the segment tree. The roots of these subtrees are nodes that list $R$, each $u_{i}$ is contained in a subtree, the nodes $u'_{1},u'_{2},\ldots,u'_{r}$ are leaves of these subtrees and the remaining leaves are $w_{1},w_{2},\ldots,w_{m}$ (the proof can be found in \cite{bern}).
 
 The deletion of an edge affects the region tree in a similar way as the insertion of a vertical edge. First of all, the horizontal segments of the rectangle $R$ are removed from $T$ (if they were stored) and all the $region$ fields that belong to $R$ obtain the value $-\infty$ (reporting at the same time the respective regions). Then, the subtree rooted at a node $u_{i}$ with leaves $u'_{j}$ is traversed in an inorder fashion. Many basic segments $s(u'_{j})$ may belong to the same revealed rectangle $R'$. Instead of accessing the tree $T$ for each of the nodes $u'_{j}$ we save and combine the queries into one query. Because of the inorder tree walk, all the basic segments belonging to a single region of a revealed rectangle will be accessed sequentially. Thus, we have to access $T$ only once for each region of a revealed rectangle $R'$. This happens when we access a basic segment $s(u'_{j})$ which belongs to a region of a different revealed rectangle.

 \begin{figure}
 \centering
 \includegraphics[scale=0.6]{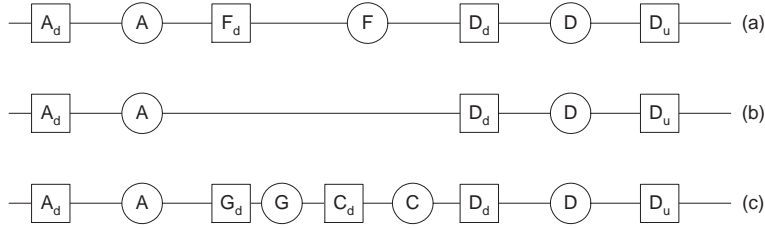}
 \caption{(a) a part of the list initially, (b) the same part after the deletion of segment $F_{d}$, (c) the list after the insertion of the segments of the revealed rectangles.}
 \label{fig:dlistright}
\end{figure}

 In Figure~\ref{fig:dlistright} an example of manipulation of $T$ is given when a right edge is encountered. Assume that the sweep plane is at position $x_{s}$ and has reached the right edge of the rectangle $F$. In a nutshell, the horizontal segment $F_{d}$ is removed from $T$ resulting in the list depicted in Figure \ref{fig:dlistright}(b). Note that $F_{u}$ is not visible and thus not present in the tree. When this segment is removed, the following regions are reported:
 
\begin{enumerate}

 \item{The region of $F$ defined by $F.y_{1}-D.y_{1}$ and $F.x-x_{s}$.}
 
 \item{The region of $A$ defined by $A.y_{1}-F.y_{1}$ and $A.x-x_{m}$.}

\end{enumerate}

 After the deletion, the $x$ coordinate of the start of the region $A$ obtains the value $x_{s}$. Assume that $C$ and $G$ are the revealed rectangles and that only their lower horizontal edges are visible. We insert first $G$ and then $C$ and finally the resulting list is depicted in Figure~\ref{fig:dlistright}(c).
 
\begin{lemma} \label{lem:right}
 Procedure $RightEdge$ requires $O(k\log{n})$ time to report $k$ visible regions for a scene of $n$ rectangles taking only into account the right edges.
\end{lemma}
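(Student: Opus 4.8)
The plan is to follow the proof of Lemma~\ref{lem:left} as a template and pay for the one structural novelty of the right edge: besides reporting and deleting the visible regions owned by $R$, procedure $RightReportRegions$ continues the descent below each visible node $u_i$ to discover the revealed pieces $s(u'_1),\ldots,s(u'_r)$ and to insert the edges of the corresponding revealed rectangles into $T$. I would therefore decompose the cost of processing a single right edge into three parts: (i) the segment-tree traversal that first reaches the canonical nodes listing $R$ and then explores the forest down to its stopping leaves $u'_j$ and $w_i$; (ii) the region-tree accesses that output and delete the regions owned by $R$; and (iii) the region-tree accesses that insert the revealed rectangles.

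First I would bound the traversal in (i). Reaching the $O(\log n)$ canonical roots by means of $RightEdge$ costs $O(\log n)$, exactly as in the left-edge analysis. Below each such root the recursion of $RightReportRegions$ branches into \emph{both} children (lines~16--17) until it halts, either at a revealed piece $u'_j$ (where $u.H<R'.z$) or at an invisible piece $w_i$ (where $R.z<u.L$). Invoking the structural description already recorded here and attributed to \cite{bern}, the explored region is a collection of full binary subtrees whose leaves are exactly the $u'_j$ and the $w_i$; hence the number of internal nodes is proportional to the number of leaves, and the whole traversal costs $O(r+m+\log n)$.

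I would treat part (ii) verbatim as in Lemma~\ref{lem:left}: for each visible node $u_i$ of the right edge of $R$ we locate the two bounding leaves $p,q$ of $T$ in $O(\log n)$ time, output and delete the $R$-owned regions between them at amortized $O(1)$ per leaf, and reset the neighbouring $x$-start fields. The hard part is (iii). A single revealed rectangle $R'$ may be cut into several basic segments $u'_j$, so querying $T$ once per $u'_j$ would cost $O(\log n)$ per basic segment and could overshoot the number of reported regions by a logarithmic factor. The obstacle is to justify the batching used by the algorithm: since the $u'_j$ are the leaves of the explored forest and the segment-tree leaves are kept in $y$-order, an inorder walk visits all basic segments belonging to one maximal region of a given revealed rectangle consecutively, so a change of owner is detected and exactly one $O(\log n)$ insertion into $T$ is issued \emph{per revealed region} rather than per basic segment. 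Each such insertion creates or updates a region that is reported, so it is charged directly to the output.

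Finally I would combine the bounds. Every region-tree access arising in (ii) and (iii) costs $O(\log n)$ and is charged to a distinct reported region, while the traversal cost $O(r+m+\log n)$ is dominated by the same accounting, because each contiguous revealed region contributes $O(\log n)$ of its $u'_j$ and $w_i$ leaves and the number of regions reported for this edge is at least the number of visible segments of $R$ and at least the number of revealed regions. Writing $l'_j$ for the number of regions reported while processing the right edge of $R_j$, the cost of that edge is $O(l'_j\log n)$, and summing over all rectangles yields
\[
 \sum_{j=1}^{n} O(l'_j\log n)=O(k\log n),
\]
where $k=\sum_{j=1}^{n} l'_j$ is the total number of visible regions reported at right edges, which establishes the lemma.
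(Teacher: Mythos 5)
Your parts (i) and (ii) follow the paper's argument, and your batching observation in (iii) (one region-tree access per revealed region, detected via the inorder walk) also appears in the paper's discussion preceding the lemma. The genuine gap is in your final accounting step: you bound the cost of a single right edge by $O(l'_j\log n)$, where $l'_j$ is the number of regions reported \emph{at that edge}, and you justify this with the claim that the number of regions reported for the edge is at least the number of revealed regions. That claim is false. Consider a rectangle $R$ that is entirely visible, with (say) $n/2$ lower rectangles stacked in the $y$-direction beneath it, all extending past $R$'s right edge in $x$. When the sweep plane reaches $R$'s right edge, exactly one region is reported (the single region owned by $R$), yet $\Theta(n)$ revealed regions must be inserted into $T$, each at cost $O(\log n)$; the per-edge cost is $\Theta(n\log n)$ while $l'_j=1$. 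So no per-edge bound of the form $O(l'_j\log n)$ can hold, and the same counterexample breaks your claim that the traversal cost is dominated by the per-edge output.

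The bound holds only in an amortized, global sense, and this is precisely how the paper argues: the $O(\log n)$ cost of inserting a revealed region into $T$ is not charged to the current edge's output but is deferred --- charged forward to the (later) sweep station at which that revealed region will itself be reported by a subsequent invocation of $LeftEdge$ or $RightEdge$ (every region inserted into $T$ is eventually reported). Since each region in the final output absorbs such a deferred $O(\log n)$ charge at most a constant number of times, summing over the whole sweep gives $O(k\log n)$, where $k$ is the \emph{total} number of reported regions, not the count at any single sweep station. Replacing your per-edge domination claim with this forward-charging argument repairs the proof; the rest of your decomposition then goes through.
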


\begin{proof}
 Assume that the visible segments of the right edge of a rectangle $R_{j}$ are $s(u_{1}),s(u_{2}),\ldots,s(u_{l_{j}})$ and the invisible segments $s(w_{1}),s(w_{2}),\ldots,s(w_{m})$, where $u_{i}$ and $w_{i}$ are nodes of the segment tree. The discovery of each $u_{i}$ node requires $O(\log{n})$ time. In addition, for each such node we update the region tree in $O(\log{n})$ time in order to report the respective visible region. In this way, the total time to report the visible regions for the right edge is $O(l_{j}\log{n})$. As a result, the time complexity for all rectangles will be $\sum_{j=1}^{n}{O(l_{j}\log{n})}$, which is bounded by $O(k\log{n})$.
 
 We must also consider the cost for computing and inserting in the region tree all revealed rectangles. Assume that a visible segment $s(u_{i})$ is divided into basic segments $s(u'_{1}),s(u'_{2}),\ldots,s(u'_{r})$. Each of these basic segments cost $O(\log{n})$ time to be inserted in $T$. However, this cost does not change the time complexity of the algorithm since each basic segment and the region it represents, will be reported later (in another sweep station) by either $LeftEdge$ or $RightEdge$. In the case of $LeftEdge$ the visible region which is represented by a basic segment will be reported by accessing $T$. Thus, the $O(\log{n})$ time overhead for each basic segment is assigned to the cost of reporting it. The same goes for the right edge.
\end{proof}

The following theorem summarizes the result.

\begin{theorem}
 The hidden surface removal problem for a set of $n$ iso-oriented rectangles can be solved in $O((n+k)\log{n})$ time and linear space, where $k$ is the number of reported regions.
\end{theorem}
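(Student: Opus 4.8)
The plan is to assemble the claimed bound from three separate accounts — the cost of the preprocessing stage, the cost of the reporting sweep, and the peak storage — and then to add the first two for the time bound while treating the space bound on its own, since the linear space is the only genuinely new ingredient over Bern's algorithm. All of the real work has already been isolated in the component lemmas, so the theorem is essentially a bookkeeping exercise that adds them up correctly.

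First I would bound the preprocessing. Sorting the vertex coordinates by $x$, $y$ and $z$ costs $O(n\log n)$ once. The scene is then cut into $\log n$ slabs, each holding $\Theta(n/\log n)$ vertical edges. For one fixed slab I build the segment tree on the $2n-1$ elementary $y$-ranges and fill every augmented field $u.H_h$, $T_v(u)$, $u.H$ and $u.L$; by Lemma~\ref{lem:con-seg} together with the two construction lemmas preceding it, the arrays $u.Top_v$, $u.xTop_v$, $u.xHL$, $u.High$ and $u.Low$ for all nodes of the slab are produced in $O(n)$ time, and Lemma~1 certifies that $u.H$ and $u.L$ carry the correct highest and lowest visible segments throughout the sweep. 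Summing over the $\log n$ slabs gives $O(n\log n)$ for the entire preprocessing stage, which is absorbed into the claimed bound.

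Next I would bound the reporting sweep. The events are exactly the $2n$ vertical edges taken in $x$-order; each lies in a single slab, so no edge is split. On a left edge I run $LeftEdge$ and on a right edge $RightEdge$, the pointer increments $u.p=u.p+1$ simulating the virtual insertions and deletions in $O(1)$ per visited node. Every edge forces an $O(\log n)$ descent of the segment tree irrespective of visibility, contributing $O(n\log n)$ in aggregate; the remaining region work is precisely what Lemmas~\ref{lem:left} and~\ref{lem:right} bound, namely $O(k\log n)$ for the left edges and $O(k\log n)$ for the right edges, the latter including the inorder discovery and batched insertion of the revealed rectangles. Here I would emphasize the one point that must be checked for $k$ to be the true number of reported regions: because the region tree persists across slab boundaries and a region's start-$x$ field travels with it, a region that continues into the next slab is not reported twice, so the slab decomposition does not inflate the output count. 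Lemma~\ref{lem:region} guarantees, in turn, that each reported region is well defined by two horizontal edges and its two $x$ values. Adding the descent overhead to the two reporting bounds, and then to the preprocessing, yields $O((n+k)\log n)$.

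The step I expect to be the crux is the linear space bound, since this is the whole improvement over Bern. The argument I would give is that only a single slab's structures are ever resident: the segment tree is rebuilt from scratch at each slab boundary and its $O(n)$ nodes are reused. The decisive estimate is the total size of the precomputed arrays within one slab. Since a slab contains only $\Theta(n/\log n)$ vertical edges, and each such edge is associated with $O(\log n)$ segment-tree nodes while appending only $O(1)$ entries per insertion or deletion to the $Top_v$, $xTop_v$, $xHL$, $High$ and $Low$ arrays, the aggregate length of all these arrays over one slab is $O((n/\log n)\cdot\log n)=O(n)$. This is exactly where cutting into $\log n$ slabs pays off: it caps the per-slab array volume at $O(n)$ rather than the $O(n\log n)$ a single global segment tree would demand. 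Finally, the region tree persists across slabs but stores only the currently visible horizontal segments, of which there are $O(n)$, so it too uses $O(n)$ space. Hence the peak storage is $O(n)$, which together with the time analysis completes the proof.
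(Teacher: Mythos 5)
Your proof is correct and follows essentially the same route as the paper: the same decomposition of the time bound into sorting, per-slab $O(n)$ precomputation over $\log n$ slabs (via Lemma~\ref{lem:con-seg}), and $O(k\log n)$ reporting (via Lemmas~\ref{lem:left} and~\ref{lem:right}), and the same space accounting split between the per-slab segment tree with its $O((n/\log n)\cdot\log n)=O(n)$ arrays and the $O(n)$-leaf region tree. Your explicit remarks about the $O(n\log n)$ descent cost for all $2n$ edges and the non-duplication of regions across slab boundaries are details the paper leaves implicit, but they do not change the argument.
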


\begin{proof}
 The time complexity of the algorithm is:\\
 
 {\em Total Time $=$ Prepr. $+$ (Precomp. of segm. tree)$\times$(\#slabs) $+$ (Reporting Time)}\\

 To sort the $x$, $y$ and $z$ coordinates, $O(n\log{n})$ time is required (Preprocessing). In each slab $O(n)$ time is necessary (Lemma~\ref{lem:con-seg}) to construct the segment tree and the arrays for each node. As a result, $O(n\log{n})$ time is needed in total because the scene is divided into $O(\log{n})$ slabs. From Lemma~\ref{lem:left} and Lemma~\ref{lem:right} we deduce that the reporting time is $O(k\log{n})$, where $k$ is the number of visible regions reported. From this discussion it is clear that the total time of the algorithm is $O((n+k)\log{n})$.
 
 The space complexity of the algorithm is: \\
 
 {\em Total Space $=$ (Space for segment tree) $+$ (Space for region tree)}\\

 The sequences $u.xTop_{v}$ , $u.Top_{v}$ and $u.xHL$ can be constructed (Lemma~\ref{lem:con-seg}) in linear time and so the space cannot be more. The skeleton of the segment tree requires linear space (since we store $O(n/\log{n})$ segments, each of which is associated with $O(\log{n})$ nodes). As a result, the total space needed by the segment tree is linear. For the region tree, the crucial observation is that at any position the sweep plane will intersect at most $2n$ horizontal segments. As a result, at most $2n-1$ regions can be visible in any sweep station of the sweep plane. Consequently, the region tree $T$ has at most $2n-1$ leaves and so it requires linear space. Therefore, the total space is $O(n)$.
\end{proof}

\section{Conclusions} \label{sec:con}

 In this paper we designed an algorithm for hidden surface removal of iso-oriented rectangles in a static scene. Our algorithm uses linear space and reports all visible regions in $O((n+k)\log n)$ time, where $n$ is the number of rectangles present in the scene and $k$ is the number of reported regions.
 
 The open problem is to design an $O(n\log n + k)$ algorithm that uses linear space for this problem. It would be also nice if these techniques could be transferred to more general scenes consisting of arbitrary rectangles or even polygons.

\end{document}